\definecolor{Gray}{gray}{0.9}
\newcolumntype{s}{>{\columncolor[HTML]{AAACED}} p{1.75cm}}
\def\eqdef{\triangleq}
\newtheorem{theorem}{Theorem}
\newtheorem{lemma}{Lemma}
\newtheorem{remark}{Remark}
\begin{document}
%

\title{MIMO with Energy Recycling}
%
%
%
%
\author{\IEEEauthorblockN{Y. Ozan Basciftci, Amr Abdelaziz and C. Emre Koksal}
\IEEEauthorblockA{ \\ $\{$basciftci.1, abdelaziz.7, koksal.2$\}$@osu.edu\\
Department of Electrical and Computer Engineering \\
The Ohio State University\\
Columbus, Ohio 43201
}}


\maketitle

\begin{abstract}
We consider a Multiple Input Single Output (MISO) point-to-point communication system in which the transmitter is designed such that, each antenna can transmit information or harvest energy at any given point in time. We evaluate the achievable rate by such an energy-recycling MISO system under an average transmission power constraint. Our achievable scheme carefully switches the mode of the antennas between transmission and wireless harvesting, where most of the harvesting happens from the neighboring antennas' transmissions, i.e., recycling. We show that, with recycling, it is possible to exceed the capacity of the classical non-harvesting counterpart. As the complexity of the achievable algorithm is exponential with the number of antennas, we also provide an almost linear algorithm that has a minimal degradation in achievable rate. To address the major questions on the capability of recycling and the impacts of antenna coupling, we also develop a hardware setup and experimental results for a 4-antenna transmitter, based on a uniform linear array (ULA). We demonstrate that the loss in the rate due to antenna coupling can be made negligible with sufficient antenna spacing and provide hardware measurements for the power recycled from the transmitting antennas and the power received at the target receiver, taken simultaneously. We provide refined performance measurement results, based on our actual measurements.
\end{abstract}


%
\IEEEpeerreviewmaketitle
\allowdisplaybreaks
\section{Introduction}
When information is transmitted wirelessly, only a small fraction of the emitted power reaches the intended receive antenna. Energy recycling is based on the premise that some of the emitted energy can be captured back and reused by the transmitter itself. In principle, a wireless transmitter equipped with energy harvesting capabilities may benefit, not only from other natural and man-made sources of wireless energy \cite{ulukus2015energy}, but also from its own transmitted power. Indeed, if done properly, this provides a significant improvement over wireless energy transfer, since the power is captured from a nearby antenna, where the power received can be orders of magnitude higher than the power that can be received from a far away station. 

Energy recycling is particularly a good fit for Multi Input Multi Output (MIMO) communication due to the potentially large number of antennas to harvest from. Indeed, with transmit beamforming, due to spatial diversity, the transmission power over each antenna can vary substantially. Therefore, at any given time, the contribution of different transmission antennas to the achieved rate can be different. If the transmit antennas can switch between two modes, signal transmission and energy harvesting, we can use antennas with relatively low contribution at a time to energy harvesting mode. The hypothesis is that, by doing so, the savings in average power can make up for the minimal loss due to not transmitting over that antenna at that time. In this paper, we show that this is indeed the case and with a careful control of switching between harvesting and transmission modes, for a given average transmission power constraint, the achievable rate can be higher than the capacity of the classical MIMO system without energy recycling. Unlike most of the available literature, we are concerned with the opportunity of energy recycling at the transmitter side rather than harvesting it at the receiver side. 


We study a Multi Input Single Output (MISO) point-to-point communication model in which, at any given point in time, the transmitter deactivates a subset of its transmitting antennas and assigns them for energy harvesting.
The energy harvested over these antennas is then recycled and can be reused by the transmitter, see Figure \eqref{fig:sys_mdl}. We study the information-theoretic limits of the aforementioned harvesting MISO system in terms of its achievable communication rate. We show that the rate achievable by the energy-recycling MISO system is above the capacity of the conventional MISO system without recycling, with an increasing margin as the number of antennas increase. Our achievable scheme includes an antenna scheduling algorithm that determines which antennas should be active as a function of the antenna-to-antenna loss gains at the transmitter and the transmitter-to-receiver channel gains. The complexity of the developed optimal antenna scheduling algorithm is found to grow exponentially with the number, $M$, of the transmitter antennas. To address this problem, we provide $O(M \log M)$ complexity algorithm, which achieves the optimal performance in a symmetric system, where the propagation loss between each pair of transmission antennas is identical. Although we considered a MISO channel for the sake of simplicity of the exposure of the ideas, the analysis conducted can be generalized to a MIMO setting in a straightforward fashion.

One of the main questions around energy recycling is its practicality in real situations. The major problem here is, due to antenna coupling, the harvesting antennas affect the energy received by the receive antennas in the far field. To understand the severity of this problem, we conduct hardware experiments using universal serial radio peripheral (USRP) devices. Clearly, the closer the harvesting antenna to an active one, the higher the harvested energy. While understanding this behavior, our experiments reveal the amount of decrease in the power at the target receiver.
We experimentally identify the optimal tradeoff between the harvested and the received power for uniform linear arrays (ULA) at the transmitter. We show that, the best recycling-receive power tradeoff is observed when the antenna-spacing in the array is identical to a quarter wavelength. 

\begin{figure}[t]
   \centering
   \includegraphics[width=0.350\textwidth]{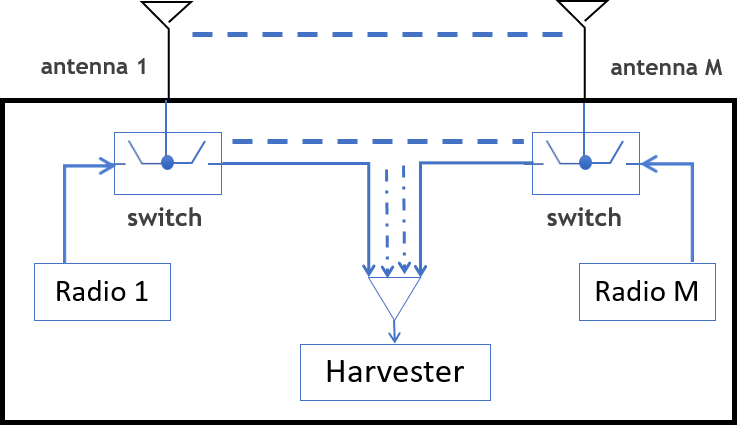}
   \caption{Transmitter with $M$ antennas. The switches control the role of each antenna, as they can actively transmit the signal fed by the radio or harvest ambient energy, a large portion of which comes from actively-transmitting antennas.}
   \label{fig:sys_mdl}
 \end{figure}   

Several research efforts are made to develop communication schemes for networks composed of energy harvesting nodes, equipped with a single antenna. The fundamental tradeoff between the rates at which energy and reliable information can be transmitted over a single noisy line is studied in \cite{varshney2008transporting}. In \cite{grover2010shannon}, the problem of wireless information and power transfer across a noisy coupled-inductor circuit, which is a frequency-selective channel with additive white Gaussian noise. The optimal tradeoff between the achievable rate and the power transferred is characterized given the total power available. Several other works considering, mainly, the optimal control policy for networks with energy harvesting nodes can be found in the literature, see e.g. \cite{gatzianas2010control,sharma2010optimal,yang2012optimal,ozel2011transmission,gunduz2011two, huang2013throughput, varan2013energy, vaze2013transmission, srivastava2013basic, jog2014energy}. Simultaneous energy and information transfer literature has been extended to MIMO broadcast \cite{zhang2013mimo}, fading \cite{liu2013wireless} and interference \cite{park2013joint} channels. 

To our best knowledge, this is the first study that considers the basic limits of wireless communication with energy recycling. Along with providing efficient resource allocation schemes with almost capacity-achieving performance, the other unique feature of our study is that, we provide a thorough evaluation of the concept of energy recycling using an actual hardware setup with USRPs.
\section{System Model}
We consider a point-to-point MISO communication in which a transmitter equipped with $M$ antennas wishes to send a message to a receiver with a single antenna. In our system, at any given point in time, a selected subset of the antennas may not transmit. Instead they harvest wireless energy, a substantial portion of which is coming from the active antennas. We call this communication model \textbf{recycling MISO}. This is unlike the traditional \textbf{non-recycling} MISO communication~\cite{telatar1999capacity,tse2005fundamentals}, where each antenna can only be in the transmission mode. In our system, the antennas that transmit are referred to as {\em active} and the antennas that remain silent and harvest energy are referred to as {\em harvesting}. 

Let $\mathcal A$ be the set that denotes indices of the active antennas at the transmitter at a particular channel use. Then, the observed signal at harvesting antenna $l$ of the transmitter at that channel use can be written as 
\begin{equation}
Z_l = \sum_{k\in\mathcal A} \sqrt{\alpha_{kl}}X_k  +N_l, \label{l1}
\end{equation}
where $X_k$ is the complex transmitted signal at active antenna $k$, $\alpha_{kl}$ is the path loss of the channel connecting $k$-th antenna to $l$-th antenna (both at the transmitter), and $N_l$ is the additive noise signal, distributed as $\mathcal{CN}(0,1)$. 
The corresponding harvested energy for $Z_k$ is $|Z_k|^2$. Note that $\{\alpha_{kl}\}_{k,l\in \mathcal M}$ depend on the geometry of antenna placements at the transmitter, where $\mathcal M \eqdef\{1,\ldots, M\}$. As seen in \eqref{l1}, in the recycling MISO communication, passive antennas harvest energy using the incoming signal from their active neighboring antennas.

We assume a flat fading transmitter-to-receiver channel, in which the channel gains in different channel uses are identically and independently distributed (i.e., rich scattering is present between the transmitter and the receiver). This assumption is merely for the simplicity of the analysis, but generalization to other propagation settings is straightforward. The observed signal at the single receive antenna at a particular channel use can be written as
\begin{align}
Y_k = \sum_{k\in\mathcal A} X_k G_k+W , \label{sys_mod}
\end{align}
where $Y_k$ is the received complex signal and $W$ is the additive noise signal, distributed as $\mathcal{CN}(0,1)$.  In the above equation, $G_k$ is the complex gain of the channel connecting the $k$-th antenna at transmitter to the receiver. We assume that the channel gains $G\eqdef [G_1,\ldots G_M]$ are independent through space and are distributed as $\mathcal{CN}(0,I_M)$, where $I_M$ is $M\times M$ identity matrix and parameter $d$ denotes the path loss gain of the channel between the transmitter and the receiver.

In this paper, we consider that full channel state information (CSI) is available, where both transmitter and the receiver have the perfect information of the channel gains. Hence, the transmitter has the ability to choose active and passive antennas as well as the transmission power as a function of the instantaneous channel gains. We develop antenna scheduling algorithm that determines which antennas should be active as a function of the antenna-to-antenna loss gains $\{\alpha_{kl}\}_{k,l=1,\ldots,M}$ at the transmitter and the channel gains. For the active antennas, we evaluate the capacity-achieving power allocation. Ultimately, we compare the achievable rates with and without recycling, subject to an average energy constraint.

\section{Basic Limits of Energy-Recycling MISO}
In this section, we evaluate a rate that is achievable with our energy-recycling MISO communication system. We also provide the associated antenna scheduling and power allocation schemes that maximize this rate. 

\subsection{Basic Limits}

First, we state the capacity of the classical no-recycling MISO communication~\cite{tse2005fundamentals,telatar1999capacity} with the channel model we presented in the previous section:
\begin{align}
C_{\text{no-ryc}} = &\max_{P(\cdot)} \mathbb E\left[\log\left(1+P(H)\sum_{k=1}^M H_k\right)\right]\label{eq:no-harvest-rate}\\
&\text{subject to}\ \mathbb E \left[P(H)\right]\leq P_c
\end{align}
where $P_c$ is  positive real number that denotes the average power constraint, $H_k \eqdef |G_k|^2$ denotes the power gain of the channel connecting $k$-th antenna at the transmitter to the receiver, and
$H\eqdef \left[H_1,\ldots, H_M\right]$. In~\eqref{eq:no-harvest-rate}, $P(\cdot):\mathbb R_{+}^M\to \mathbb R_{+}$ denotes the power allocation function, where $\mathbb R_+$ denotes all non-negative real numbers. Note that, here the average power constraint is a long-term average across time, summed over all transmission antennas.

The associated capacity-achieving power allocation function can be found to be the water-filling solution:
\begin{equation}
\label{eq:waterfilling}
P(h) = \left(\lambda-\frac{1}{\sum_{k=1}^M h_k}\right)^+,
\end{equation}
where $h\eqdef [h_1,\ldots, h_M]$ is the channel power gain sequence and $\lambda$ is non-negative real number that is chosen such that $\mathbb E\left[P(H)\right] = P_c$. Note that, Eq.~(\ref{eq:waterfilling}) provides the total power allocation {\em over all antennas} at a given point, $k$, in time. The distribution of power {\em across antennas} to achieve the capacity given in (\ref{eq:no-harvest-rate}) is found to be {\em conjugate beamforming}, which leads to the following transmitted signal vector across the antennas:
\begin{equation}
\mathbf{X} = \frac{G^*}{||G||}\cdot s,
\nonumber
\end{equation}
where $s$ is the complex data signal.



Our first theorem provides a rate achievable with energy-recycling MISO communication. Subsequently, we compare the achievable rate with the capacity, \eqref{eq:no-harvest-rate}, of the classical non-recycling system.
\begin{theorem}\label{lemma:mainresult} The following rate is achievable with MISO harvesting communication:
\begin{align}
R_{\text{ryc}} = &\max_{\mathcal A(\cdot),P(\cdot)} \mathbb E\left[\log \left(1 + P(H)\sum_{k\in \mathcal A(H)}H_k\right)\right]\label{eq:harvesting_rate}\\
&\text{subject to}\ \mathbb E\left[P(H)f(H)\right]\leq P_c,
\label{eq:harvesting_power_constraint}
\end{align}
where $\mathcal A(\cdot)$ is a mapping from $M$ dimensional non-negative real space $\mathbb R_{+}^M$ to a non-empty subset of $\mathcal M$ and function $f(\cdot):\mathbb R_{+}^M\to \mathbb R_{+}$ is identical to:
$$
f(h) = \left(1-\frac{\sum_{k\in\mathcal{A}(h)}\sum_{l\in\mathcal{A}^c(h)}h_k\alpha_{kl}}{\sum_{k\in\mathcal{A}(h)}h_k}\right).
$$ for any $h\in \mathbb R_{+}^M$.\qed
\end{theorem}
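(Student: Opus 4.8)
The plan is to exhibit an explicit transmission scheme and verify that it simultaneously achieves the rate \eqref{eq:harvesting_rate} and meets the average‑power constraint \eqref{eq:harvesting_power_constraint}. Because the transmitter‑to‑receiver fading is i.i.d.\ across channel uses and full CSI is available at both ends, it suffices to describe a generic channel use with power‑gain vector $H=h$ and then invoke a routine ergodic random‑coding argument. In that channel use the transmitter activates exactly the antennas in $\mathcal A(h)$, leaves the antennas in $\mathcal A^c(h)$ silent so that they harvest, draws total instantaneous power $P(h)$ from its (external plus recycled) supply, and applies conjugate beamforming over the active antennas: it sends $X_k=\sqrt{P(h)}\,G_k^*\big/\sqrt{\sum_{j\in\mathcal A(h)}h_j}\cdot s$ for $k\in\mathcal A(h)$ and $X_k=0$ otherwise, where $s$ is a unit‑variance $\mathcal{CN}(0,1)$ codeword symbol.

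\emph{Rate.} Substituting this signal into \eqref{sys_mod} collapses the link to the scalar AWGN channel $Y=\sqrt{P(h)\sum_{k\in\mathcal A(h)}h_k}\;s+W$ with instantaneous SNR $P(h)\sum_{k\in\mathcal A(h)}h_k$. Using a random Gaussian codebook for $s$, with the transmitter adapting $(\mathcal A(\cdot),P(\cdot))$ to $H$ and the receiver decoding with knowledge of $H$, the standard ergodic channel‑coding theorem for fading channels with CSI delivers the rate $\mathbb E\!\left[\log\!\left(1+P(H)\sum_{k\in\mathcal A(H)}H_k\right)\right]$; optimizing over the admissible pairs $(\mathcal A(\cdot),P(\cdot))$ gives \eqref{eq:harvesting_rate}.

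\emph{Power.} The transmit energy summed over the antennas in this channel use is $\sum_{k\in\mathcal A(h)}\mathbb E_s|X_k|^2=P(h)$. By \eqref{l1}, silent antenna $l\in\mathcal A^c(h)$ harvests $|Z_l|^2$; taking expectation over $s$ and $N_l$, and using that conditioned on $H=h$ the phases of $G_1,\dots,G_M$ remain independent and uniform (so the cross terms $G_k^*G_{k'}$, $k\neq k'$, average to zero) while $|G_k|^2=h_k$, the expected harvested energy at antenna $l$ equals $\dfrac{P(h)}{\sum_{j\in\mathcal A(h)}h_j}\sum_{k\in\mathcal A(h)}\alpha_{kl}h_k+1$, hence is at least $\dfrac{P(h)}{\sum_{j\in\mathcal A(h)}h_j}\sum_{k\in\mathcal A(h)}\alpha_{kl}h_k$. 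Summing over $l\in\mathcal A^c(h)$ and subtracting the recycled energy from the transmit energy, the net externally supplied power in this channel use is at most $P(h)\left(1-\dfrac{\sum_{k\in\mathcal A(h)}\sum_{l\in\mathcal A^c(h)}h_k\alpha_{kl}}{\sum_{k\in\mathcal A(h)}h_k}\right)=P(h)f(h)$ (and $f(h)>0$, since energy conservation forces $\sum_{l\in\mathcal A^c(h)}\alpha_{kl}<1$). Averaging over the i.i.d.\ fading and applying the law of large numbers along the length‑$n$ codeword, the long‑term average external power tends to $\mathbb E[P(H)f(H)]\le P_c$.

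\emph{Main obstacle.} The step needing the most care is energy causality: the recycled energy is harvested online and cannot be spent before it is received. This is handled by a standard energy‑buffering argument — since the fading is i.i.d.\ and the per‑channel‑use net demand $P(H)f(H)$ has mean at most $P_c$, one may precede transmission by an $o(n)$‑length pure‑harvesting phase (equivalently, allow an $o(n)$ initial battery reserve), after which the battery stays non‑negative with probability tending to one; the induced rate and average‑power back‑offs are $o(1)$, so \eqref{eq:harvesting_rate}–\eqref{eq:harvesting_power_constraint} are achieved in the limit. The remaining ingredients — the scalar‑channel reduction and the ergodic coding theorem — are entirely routine.
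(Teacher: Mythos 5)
Your proposal is correct and follows essentially the same route as the paper's proof: conjugate beamforming over the active set with a Gaussian codebook, the standard ergodic scalar-channel reduction for the rate, and the computation of the expected harvested power in which the cross terms $G_k^*G_{k'}$ vanish by phase independence, yielding net consumed power at most $P(H)f(H)$. The only addition is your energy-causality/buffering remark, which the paper omits (it imposes the constraint directly on the long-term average consumed power) but which is a harmless and arguably welcome refinement rather than a different argument.
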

The proof of Theorem~\ref{lemma:mainresult} can be found in Appendix~\ref{sec:lemma_proof}.
In Theorem~\ref{lemma:mainresult}, subset $\mathcal A(h)\subseteq \mathcal M$ denotes the set of active antennas for channel power sequence $h$. The transmitter keeps the antennas in subset $\mathcal A^c(h)$ in the harvesting mode in order to harvest energy using the signal coming from the active antennas in set $ \mathcal A(h)$. In order to achieve rate~\eqref{eq:harvesting_rate}, the transmitter employs conjugate beamforming across the active antennas, similar to the non-recycling setting. Also, observe that, the average power constraint is on the consumed energy rather than the transmitted energy. For example, suppose the average power constraint is $1$mW, we allow our transmitters to transmit an average power of $1.1$mW if an average power of $0.1$mW is recycled back into the batteries.



\begin{remark} One can observe that $R_{\text{ryc}} \geq C_{\text{no-ryc}}$: Let $\mathcal A(h)=\mathcal M $ for any $h\in \mathbb R_{+}^M$. Then, $f(h)=1$ for any $h$ and the optimization problem in~\eqref{eq:harvesting_rate} becomes equal to that in~\eqref{eq:no-harvest-rate}.\qed
\end{remark}

\subsection{Optimal Antenna Scheduling and Power Allocation}

We next find a power allocation function $P(\cdot)$ and mapping $\mathcal A(\cdot)$ solving the optimization problem in~\eqref{eq:harvesting_rate}. Define power allocation function $P_{\text{new}}(\cdot):\mathbb R_+^M\to \mathbb R_+$ as 
\begin{equation}
P_{\text{new}}(h) \eqdef P(h)f(h)\nonumber
\end{equation}
for any $h\in \mathbb R_+^M$. We can write the optimization problem in~\eqref{eq:harvesting_rate} in terms of $P_{new}(\cdot)$ as 
\begin{align}\label{eq:harvesting_rate2}
\begin{aligned}
R_{\text{ryc}} =& \max_{\mathcal A(\cdot),P(\cdot)} \mathbb E\left[\log \left(1 + P_{\text{new}}(H)g_{\mathcal A}(H)\right)\right]\\
&\text{subject to}\ \mathbb E\left[P_{\text{new}}(H)\right]\leq P_c
\end{aligned}
\end{align}
where function $g_{\mathcal A}(\cdot):\mathbb R_{+}^M\to \mathbb R_{+}$ is defined as
\begin{align*}
g_{\mathcal A}(h) \eqdef \frac{\sum_{k\in\mathcal A(h)} h_k}{f_{\mathcal A}(h)}
=\frac{(\sum_{k\in \mathcal{A}(h)}h_k)^2}{\sum_{k\in \mathcal{A}(h)}h_k-\sum_{k\in \mathcal{A}(h)}\sum_{k\in\mathcal{A}^c(h)}h_k\alpha_{kl}}
\end{align*}
for any mapping $\mathcal A(\cdot):\mathbb R_{+}^M\to \mathbb M $. Notice that, in~\eqref{eq:harvesting_rate2}, the maximization is over $P(\cdot)$. Hence replacing $P(\cdot)$ with $P_{\text{new}}(\cdot)$ does not change the value of the optimization problem. We can rewrite the optimization problem in~\eqref{eq:harvesting_rate2} as  
\begin{align}\label{eq:harvesting_rate4}
\begin{aligned}
R =& \max_{\mathcal A(\cdot),P_{\text{new}}(\cdot)} \mathbb E\left[\log \left(1 + P_{\text{new}}(H)g_{\mathcal A}(H)\right)\right]\\
&\text{subject to}\ \mathbb E\left[P_{\text{new}}(H)\right]\leq P_c.
\end{aligned}
\end{align}
We next provide mapping $\mathcal A(\cdot)$ that solves \eqref{eq:harvesting_rate4}. First note that for any power allocation function $P_{\text{new}}(\cdot)$ and mapping  $\mathcal A(\cdot):\mathbb R_{+}^M\to \mathbb M $, inner maximization in~\eqref{eq:harvesting_rate4} is bounded as
\begin{multline}
\label{eq:find_subsets1}
\mathbb E\left[\log \left(1 + P_{\text{new}}(H)g_{\mathcal A}(H)\right)\right] \\
\leq \mathbb E\left[\log \left(1 + P_{\text{new}}(H)g_{\text{max}}(H)\right)\right],
\end{multline}
where $g_{max}(H) \eqdef \max_{\mathcal A} g_{\mathcal A}(H)$. Hence, mapping $\mathcal A_{\text{max}}(\cdot)$, defined as 
\begin{equation}
\mathcal A_{\text{max}}(h) \eqdef \arg\max _{\mathcal{A(\cdot)}} g_{\mathcal A}(h)\nonumber
\end{equation}
for all $h\in \mathbb R_+$ solves the optimization problem in~\eqref{eq:harvesting_rate4}. We conclude that $A_{\text max}(h)$ is the set of antennas that the transmitter schedules as active for channel power sequence $h$ in order to achieve the rate proposed in Theorem~\ref{lemma:mainresult}.

Replacing the inner maximization in~\eqref{eq:harvesting_rate4} with the expectation term in (\ref{eq:find_subsets1}) leads to the following optimization problem:
\begin{align}\label{eq:harvesting_rate3}
\begin{aligned}
R_{ryc} =&\max_{P_{\text{\text{new}}}(\cdot)} \mathbb E\left[\log \left(1 + P_{\text{\text{new}}}(H)g_{\text{max}}(H)\right)\right]\\
&\text{subject to}\ \mathbb E\left[P_{\text{new}}(H)\right]\leq P_c.\\
\end{aligned}
\end{align}
The power allocation function that solves the optimization problem in~\eqref{eq:harvesting_rate3} has the water-filling form, given by:
\begin{equation}
P_{\text{\text{new}}}(h) = \left(\lambda-\frac{1}{g_{\text{max}}(h)}\right)^+,\nonumber
\end{equation}
where $\lambda$ is non-negative real number that is chosen such that $\mathbb E\left[P_{\text{new}}(H)\right] = P_c$.

\subsection{Low-complexity Near-Optimal Antenna Scheduling and Power Allocation}

Note that in order to find the active antennas at given channel power gain vector $h$, the transmitter needs to find $\mathcal A_{\text {max}}(h)$. Since there are $2^M-1$ non-zero subsets of $\mathcal M$, the antenna scheduling process will take a time that varies exponentially with the number of antennas, $M$. 

To address this complexity issue, we next provide an algorithm for computing the active antenna set, $\mathcal A_{\text max}(h)$ for a given channel power sequence $h=[h_1, \ldots, h_M]$, with a time complexity that scales as $O(M\log M)$. We also show that the low-complexity antenna scheduler is optimal when all coefficients $\{\alpha_{kl}\}_{k,l=1,\ldots M}$ are identical.

Suppose that $\alpha_{kl} = \alpha$ for any $k,l\in\mathcal M$. Mapping $g_{\mathcal{A}}(h)$ reduces to:
\begin{align}
g_{\mathcal A}(h)&=\frac{(\sum_{k\in \mathcal{A}(h)}h_k)^2}{\sum_{k\in \mathcal{A}(h)}h_k-\sum_{k\in \mathcal{A}(h)}\sum_{l\in\mathcal A (h)^c}h_k\alpha} \nonumber \\
&=\frac{\sum_{k\in \mathcal A(h)} h_k}{1-(M-|\mathcal A(h)|)\alpha}. \label{simplified}
\end{align}
Suppose that $\{h_{n_k}\}_{k\in\{1,\dots,M\}}$ is a channel power sequence sorted in a descending order such that $h_{n_{k}}\geq h_{n_{k+1}}$ for all $k\in{1,\ldots M-1}$.
We define function $g_i(\cdot):\mathbb R_+^M\to \mathbb R_+$ as 
$$
g_i(h) \eqdef\frac{\sum_{k=1}^i h_{n_k}}{1-(M-i)\alpha}.
$$
Note that $g_i(h)=g_{\mathcal A}(h)$ for $\mathcal A(h) = \{n_1, \dots n_i\}$ due to definition above and \eqref{simplified}. Further,
\begin{align}
g_i(h) = \max_{\{ \mathcal A(\cdot)\ \left| \ |\mathcal A(h)|=i \right. \}} g_{\mathcal A}(h)\nonumber
\end{align}
due to the fact that $\{h_{n_1}, h_{n_2},\ldots, h_{n_i}\}$ are the $i$ largest channel power gains in $h$. Then, we can write the following:
\begin{align}
 \max_{i\in\mathcal M}g_i(h) &=\max_{i\in\mathcal M}\max_{\{ \mathcal A(\cdot)\ \left| \ |\mathcal A(h)|=i \right. \}} g_{\mathcal A}(h) \nonumber\\
&=\max_{\mathcal A(\cdot)}g_{\mathcal A}(h). \nonumber
\end{align}
Defining $i^{*}\eqdef \arg \max_{i\in\mathcal M}g_i(h)$, we conclude that $\{n_1, \dots n_{i^{*}}\} =  \mathcal A_{\text max}(h)$. Hence, once the transmitter computes index $i^*$, it finds which antennas to keep active. Exploiting the fact that all $\alpha_{kl}$'s are identical, we converted the problem $\arg\max_{\mathcal A(\cdot)} g_{\mathcal A}(h)$ into a simpler problem $\arg\max_i g_i(h)$. The time complexity of solving the first problem is $O(2^M)$ due to the fact there are $2^M-1$ non-empty subset of $\mathcal M$. On the other hand, solving the second problem has a time complexity of $O(M)$. 

Furthermore, the transmitter does not even need to compute all $g_i(h)$'s in order to find index $i^*$. Next lemma clarifies this proposition.

\begin{lemma}\label{lemma:max}
For any $h\in \mathbb R_+^M$, if there exists $i\in \{1,\ldots, M-1\}$ such that $g_{i}(h)\geq g_{i+1}(h)$, then $g_{i}(h)\geq g_{j}(h)$ for any $j$ such that $i\leq j\leq M$.
\end{lemma}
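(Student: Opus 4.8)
The plan is to prove the stronger statement that $g_i(h)\geq g_{i+1}(h)$ already forces $g_{i+1}(h)\geq g_{i+2}(h)$; iterating this gives $g_i(h)\geq g_{i+1}(h)\geq\cdots\geq g_M(h)$, which is exactly the claim. The whole argument hinges on rephrasing the hypothesis, the propagation step, and the conclusion through a single scalar threshold.

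First I would fix notation. Write $S_i\eqdef\sum_{k=1}^i h_{n_k}$ and $d_i\eqdef 1-(M-i)\alpha$, so that $g_i(h)=S_i/d_i$. Two structural facts drive everything: the increments $S_{i+1}-S_i=h_{n_{i+1}}$ are nonincreasing in $i$ because $h$ is sorted in descending order, whereas $d_{i+1}-d_i=\alpha$ is constant. I would also note that we may assume $d_i>0$ over the relevant range of indices (equivalently $(M-1)\alpha<1$); otherwise the consumed-power function $f$ is nonpositive and the scheduling problem is degenerate.

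Next comes the key reformulation. Since $d_i,d_{i+1}>0$, cross-multiplying shows $g_i(h)\geq g_{i+1}(h)$ is equivalent to $S_i d_{i+1}\geq S_{i+1}d_i$, i.e.\ to $S_i(d_i+\alpha)\geq (S_i+h_{n_{i+1}})d_i$, i.e.\ to $\alpha S_i\geq d_i h_{n_{i+1}}$, that is,
\[
g_i(h)=\frac{S_i}{d_i}\geq\frac{h_{n_{i+1}}}{\alpha}.
\]
So the hypothesis of the lemma is just $g_i(h)\geq h_{n_{i+1}}/\alpha$. Now I would invoke the mediant inequality: $g_{i+1}(h)=\frac{S_i+h_{n_{i+1}}}{d_i+\alpha}$ is a weighted mediant of $S_i/d_i$ and $h_{n_{i+1}}/\alpha$ with positive weights $d_i$ and $\alpha$, hence lies between them; combined with $g_i(h)\geq h_{n_{i+1}}/\alpha$ this gives $g_{i+1}(h)\geq h_{n_{i+1}}/\alpha$. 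Because $h$ is sorted, $h_{n_{i+1}}\geq h_{n_{i+2}}$, so $g_{i+1}(h)\geq h_{n_{i+2}}/\alpha$, which by the reformulation applied at index $i+1$ is precisely $g_{i+1}(h)\geq g_{i+2}(h)$. Induction on $j$ from $i$ to $M-1$ then closes the argument.

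I do not expect a substantive obstacle here: the only point requiring care is the positivity of $d_i$ (equivalently $f>0$), which has to be assumed for the quantities to be meaningful, and the bookkeeping observation that the lemma's hypothesis, the mediant bound, and the conclusion all reduce to comparisons against the single threshold $h_{n_{i+1}}/\alpha$ (respectively $h_{n_{i+2}}/\alpha$), with the descending ordering of $h$ providing the link between consecutive indices.
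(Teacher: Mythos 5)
Your proof is correct, but it is organized differently from the paper's. Both arguments pivot on the same cross-multiplied reformulation ($g_i(h)\geq g_{i+1}(h)$ iff $\alpha S_i\geq d_i\,h_{n_{i+1}}$, i.e.\ $g_i(h)\geq h_{n_{i+1}}/\alpha$, with both implicitly requiring $d_i>0$, a condition the paper leaves tacit and you rightly flag). From there the paper compares $g_i$ directly against $g_{i+t}$ for an arbitrary offset $t$: it multiplies the hypothesis by $t$ and uses $t\,h_{n_{i+1}}\geq\sum_{k=i+1}^{i+t}h_{n_k}$ to land on the cross-multiplied form of $g_i(h)\geq g_{i+t}(h)$ in one shot, with no induction. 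You instead prove the stronger statement that the hypothesis propagates one index forward — via the mediant inequality, $g_{i+1}$ sits between $g_i$ and the threshold $h_{n_{i+1}}/\alpha$, hence stays above $h_{n_{i+2}}/\alpha$ — so the whole tail $g_i\geq g_{i+1}\geq\cdots\geq g_M$ is monotonically nonincreasing. Your version buys a cleaner structural fact (once the sequence dips it never recovers), which is precisely what justifies the early-stopping rule in the scheduling algorithm; the paper's version avoids induction and the mediant machinery at the cost of a slightly messier summed inequality. Both are sound.
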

\begin{proof}
Suppose there exists $i\in \{1,\ldots, M-1\}$ such that $g_{i}(h)\geq g_{i+1}(h)$. The inequality  $g_{i}(h)\geq g_{i+1}(h)$ is equivalent to the following:
\begin{align}
\alpha \sum_{k=1}^i h_{n_k}\geq h_{n_{i+1}}\cdot(1-(M-i)\alpha)\label{given}
\end{align}
Pick an arbitrary integer $t$ such that $i+t\leq M$. Similarly, the inequality  $g_{i}(h)\geq g_{i+t}(h)$ is equivalent to the following:
\begin{align}
t\cdot\alpha \sum_{k=1}^i h_{n_k}\geq \sum_{k=i+1}^{i+t} h_{n_{k}}\cdot(1-(M-i)\alpha)\label{tobeproved}
\end{align}

In order to prove Lemma~\ref{lemma:max}, we need to show that the inequality in~\eqref{given} implies the inequality in~\eqref{tobeproved}. Hence, assuming the inequality in~\eqref{given} is true, we have the following:
\begin{align}
t\cdot \alpha \sum_{k=1}^i h_{n_k}&\geq  t\cdot h_{n_{i+1}}(1-(M-i)\alpha)\nonumber\\
&\geq \sum_{k=i+1}^{i+t}h_{n_{k}}\cdot(1-(M-i)\alpha),\nonumber
\end{align}
where the first inequality follows from the inequality in~\eqref{given}. The second inequality follows from the fact that $h_{n_{i+1}}\geq \max_{j:i+1\leq j\leq i+1} h_{n_j}$.
\end{proof}

 We now provide the antenna scheduling algorithm. First the transmitter sorts the channel power sequence  $\{h_{n_k}\}_{k\in\{1,\dots,M\}}$ in a descending order such that $h_{n_{k}}\geq h_{n_{k+1}}$ for all $k\in{1,\ldots M-1}$. Then, the transmitter evaluates $i^{*}= \arg \max_{i\in\mathcal M}g_i(h)$ using Lemma~\ref{lemma:max}. More specifically, starting at $i=1$ and increasing $i$ by one at each step, the transmitter computes $g_i(h) $ until step $j$ where $g_j(h)\geq g_{j+1}(h)$. The transmitter sets $i^*$ to $j$. If there is no $i\in \{1,\ldots, M-1\}$ such that $g_{i}(h)\geq g_{i+1}(h)$, the transmitter sets $i^*$ to $M$. Finally, transmitter schedules antennas with indices $\{n_1,\ldots, n_{i^*}\}$ as active antennas. 
 
 The steps of the algorithm are formally given below:
 

\begin{algorithm}
\label{alg:ant_sc}
   \caption{Antenna Scheduling}
    \begin{algorithmic}[1]
      \State Sort {$\{h_{n_k}\}_{k\in\{1,\dots,M\}}$}\Comment{such that $h_{n_{k}}\geq h_{n_{k+1}}$ for all $k\in{1,\ldots M-1}$}
        \State $i=1$
        \While {$i<M$ and $g_i(h) < g_{i+1}(h)$ } $i=i+1$
	\EndWhile
	\State $\mathcal A_{\text{max}}(h) = \{n_1,\ldots,n_i\}$
%
%
%
%

\end{algorithmic}
\end{algorithm}


The time complexity of the above algorithm is  $O(M\log M)$ since time complexity of sorting channel power sequence is $O(M\log M)$ and the time complexity of finding $i^*$ is $O(M)$.
\section{Numerical Results}
In this section we conduct simulations to illustrate the impact of harvesting on the achievable rate. Throughout the section, we take the {\em mean} path loss between the transmitter and the receiver as $-60$dB, i.e., $\mathbb E[H_i] = 10^{-6}$ for all $i\in\mathcal M$. In some of our simulations, we impose a limit the maximum number of antennas that recycle energy at a given time in order to decrease the running time. We denote the maximum number of harvesting antennas as $M_{\text{harvest}}$. We also define {\bf antenna penalty} as the number of antennas that we need to overcome not using recycling. For example, if the antenna penalty is $3$, it means the same rate as the non-recycling system can be achieved with $3$ less antennas with the recycling system.

In Figure~\ref{SNR_compare}, we plot rate $R_{ryc}$ achieved recycling MISO communication provided in Theorem~\ref{lemma:mainresult} and the capacity of non-recycling MISO communication in~(\ref{eq:no-harvest-rate}) as a function of $M$ for different received signal to noise ratio (SNR) values. We set $M_{\text{harvest}}=5$ and all $\{\alpha_{i,j}\}_{i,j\in\mathcal M}$ to $-15$dB. Even though at most $5$ antennas are allowed to harvest, we observe that the rate gap between the harvesting and non-harvesting MISO communication widens as $M$ increases. The reason is that the number of options the transmitter has for selecting the harvesting antennas increases with $M$. {\em The increased spatial diversity in the main channel provides an increased diversity in the choice of recycling antennas.} 

In Figure~\ref{SNR_compare}, we observe rate gains of $\%7$, $\%10$ and $\%20$ associated with recycling at SNR values $10$dB, $0$dB, and $-10$dB, respectively for a system size of $M=25$. Furthermore, the antenna penalty associated with the non-recycling system is larger than $3$, across all SNR values.
%
\begin{figure}[t]
   \centering
   \includegraphics[width=0.40\textwidth]{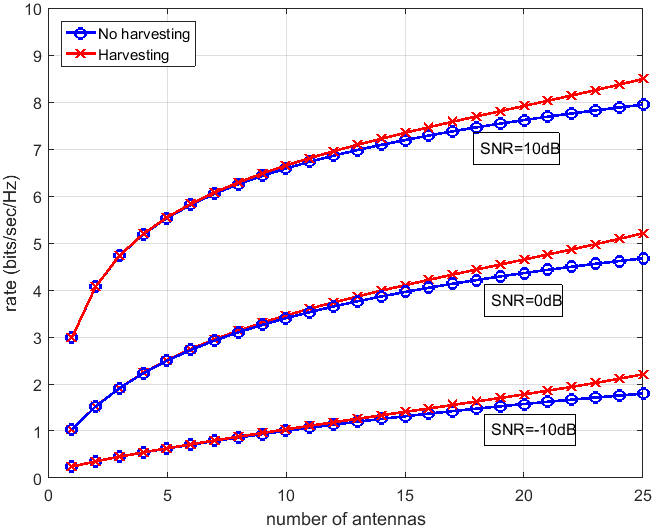}
   \caption{The variation of the achievable rate and the capacity at recycling and non-recycling MISO communications, respectively at different SNR values.}
  \label{SNR_compare}
 \end{figure}

In Figure~\ref{fig:dim_return}, we plot the rate achieved with recycling MISO communication as a function of $M_{\text{harvest}}$. We observe that achievable rate remains constant after $M_{\text{harvest}}$ exceeds $6$ antennas. The reason can be seen in Figure~\ref{fig:average_active}. In Figure~\ref{fig:average_active}, we plot the average number of harvesting antennas as a function of $M$. We observe that the average number of harvesting antennas is smaller than $6$ antennas for any $M$. Hence, picking $M_{\text{harvest}}$ greater than $6$ antennas will not bring an advantage.

\begin{figure}[t]
   \centering
   \includegraphics[width=0.350\textwidth]{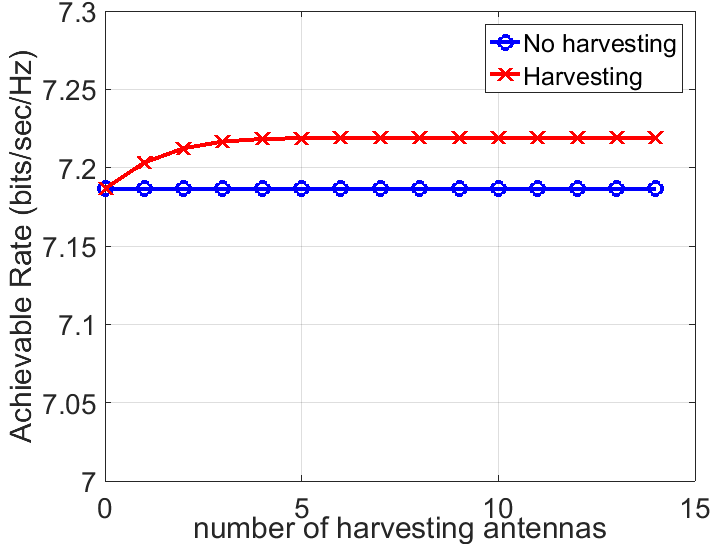}
   \caption{The variation of the rate achieved in recycling MISO communication as a function of number of maximum allowable harvesting antennas.}
   \label{fig:dim_return}
 \end{figure}
\begin{figure}[t]
   \centering
   \includegraphics[width=0.350\textwidth]{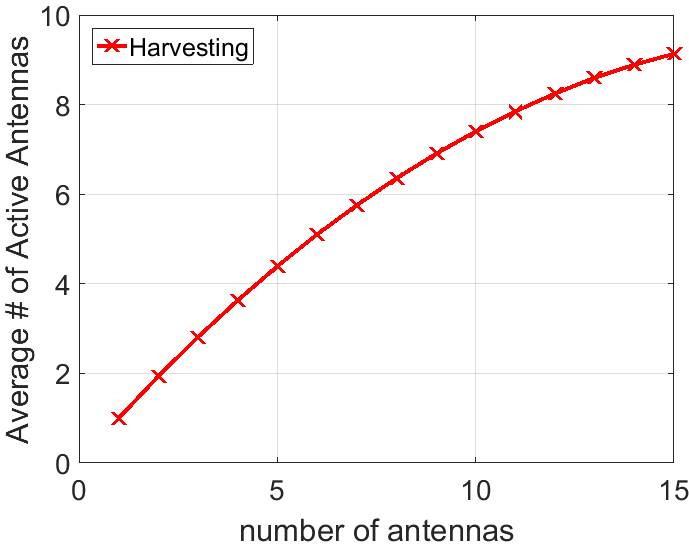}
   \caption{The variation of the average number of active (non-harvesting) antennas as a function of the total number of antennas at the transmitter}
   \label{fig:average_active}
 \end{figure}

We next consider a particular geometry for placing antennas to the transmitter, as shown in Figure~\ref{fig:layout}. Specifically, the antennas are placed on the center and the corners of each small hexagonal. The minimum distance between antennas is $\frac{\lambda}{3}$ that leads to $\alpha=10.3$dB (as per our hardware measurements as discussed in the next section), where $\lambda$ is the wavelength. We set $M_{\text{harvest}}$ to $6$ antennas and the received SNR to $10$dB. In Figure~\ref{fig:hexrate}, we observe that antenna penalty associated with non-recycled transmitter is $4$ antennas at $M=20$.

\begin{figure}[t]
   \centering
   \includegraphics[width=0.40\textwidth]{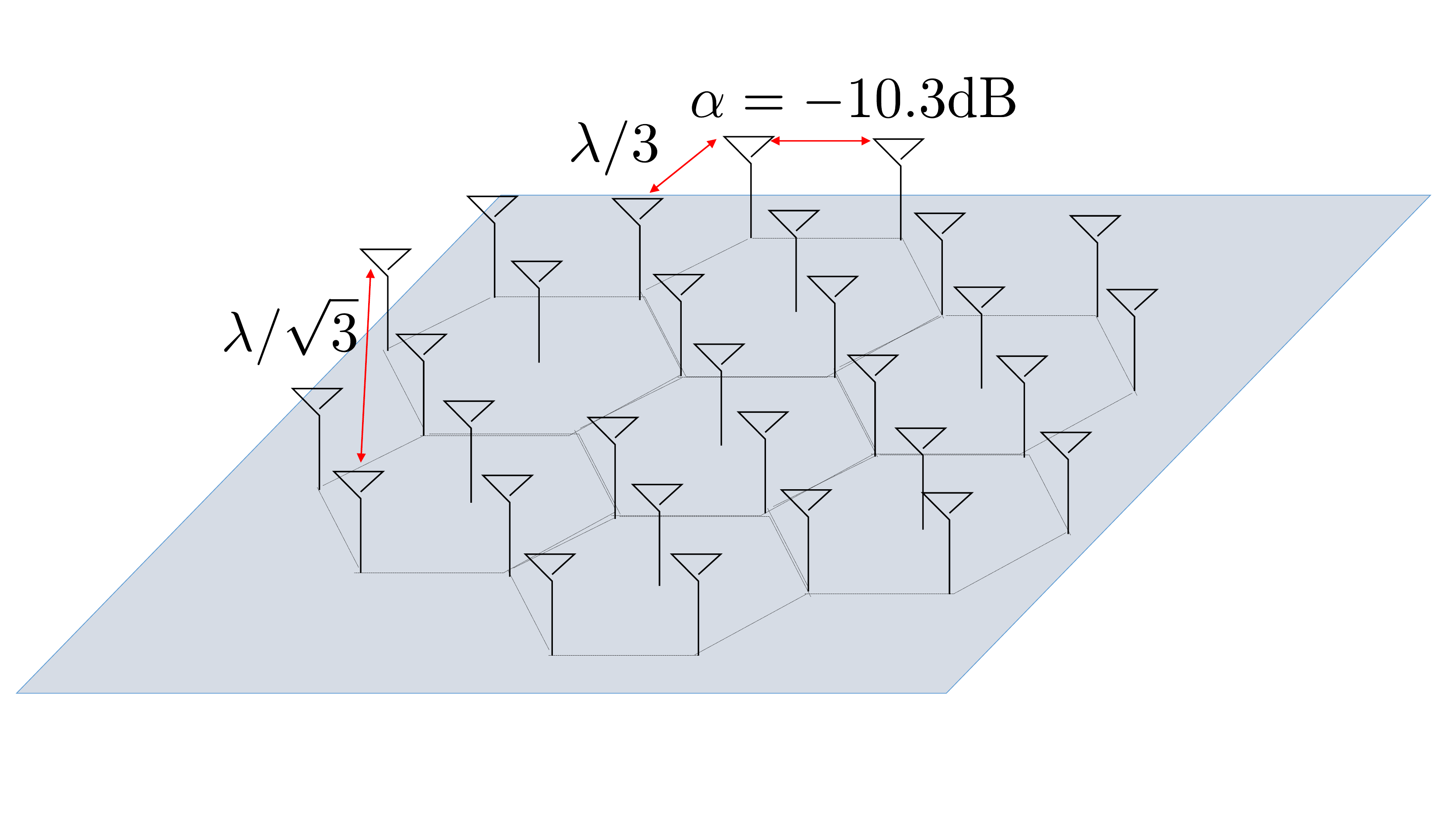}
   \caption{The placement of the transmitter antennas on a hexagonal grid. Each side of small hexagonals has a length of $\frac{\lambda}{3}$, where $\lambda$ is the wavelength.}
   \label{fig:layout}
 \end{figure}
\begin{figure}[t]
   \centering
   \includegraphics[width=0.40\textwidth]{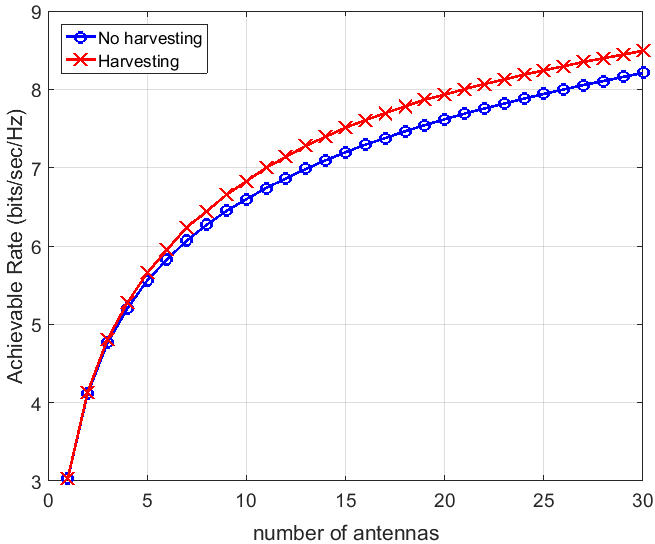}
   \caption{The variation of the achievable rate and the capacity at recycling and non-recycling MISO communications, respectively as a function of the total number of antennas at the transmitter.}
   \label{fig:hexrate}
 \end{figure}


\section{Hardware Measurements and Experimental Results}
\label{sec:exp}

This section aims to address the questions around practicality of energy recycling. In particular, while having an inactive subset of the transmitter antenna array dedicated for energy harvesting, we study the trade-off between harvested and received power at the target receiver. While it can be expected that the closer the harvesting to the active antenna, the larger the power that can be harvested. Meanwhile, it is unclear how does this small separation affect the signal received at the target receiver affected by the coupling between transmission and harvesting antennas. Also, we aim to provide experimental results that shed the light on the selection of active and harvesting antennas. To that end, we use universal serial radio peripherals (USRP) to implement a MISO system with four elements uniform linear antenna array (ULA). First, we provide the hardware setup and the specification of the communication waveform used in these experiments. Then, we provide the experimental results for different selections of active and harvesting array subsets as a function of the array spacing.

\subsection{Hardware Setup}
Hardware system components are listed in the Table \ref{table:hardware_list}. Two CBX RF daughter boards from Ettus Research are installed in a single X300 USRP. Each RF daughter board has a single transmit single receive channels yielding a total of dual transmit dual receive channels. Our experiment goes by activating a maximum of two out of the four element antenna array while the remaining elements are dedicated for energy harvesting. The harvested and received power are measured using a stand alone spectrum analyzer. Figures~\ref{fig:EXP_SETUP} and \ref{fig:spectrum_analyzer} show the transmitter and spectrum analyzer used in our experiment. 
\begin{figure}[t]
   \centering
   \includegraphics[width=0.40\textwidth]{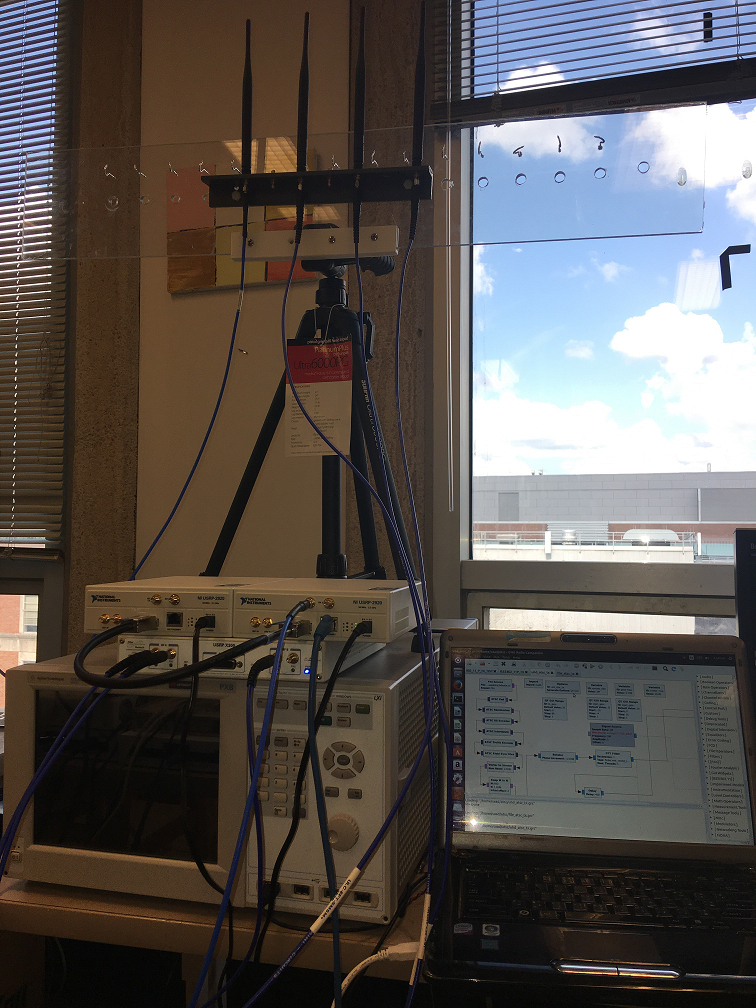}
   \caption{Ettus $X300$ with CBX daughter board takes the role of transmitter}
   \label{fig:EXP_SETUP}
 \end{figure}

\begin{figure}[t]
   \centering
   \includegraphics[width=0.40\textwidth]{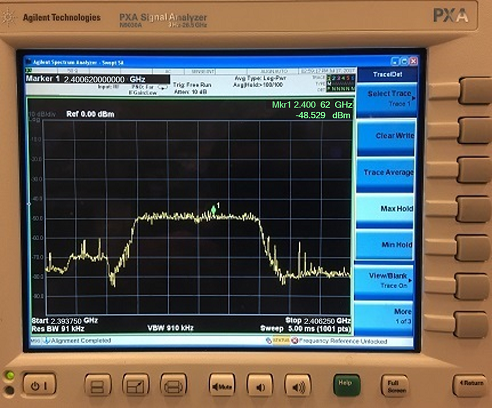}
   \caption{Spectrum analyzer used for power measurements. Center frequency $2.4 GHz$, Signal bandwidth $5 MHz$ , Resolution bandwidth $91 KHz$.}
   \label{fig:spectrum_analyzer}
 \end{figure}

 
\begin{table*}
\caption{List of Required Hardware Component.}
\label{table:hardware_list}
\begin{center}
 \begin{tabular}{||p{4cm} |p{4cm} |p{8cm} ||} 
 \hline\hline
 \rowcolor{Gray}
 Component & Type & Role in Experiment  \\ [0.5ex] 
 \hline\hline
 CPU & Intel Core i5-3200 CPU 3.40GHz $\times$ 1 & Host for signal processing  \\ 
 \hline
 Operating System & Ubuntu 16.04 LTS, 64 bits & --- \\
 \hline
 GNU Radio & Version 3.7.10 & Signal Processing Environment  \\
 \hline
 USRP & Ettus X300 $\times$ 1  & Transmitter \\
  \hline 
 RF Daughter Board & Ettus CBX $\times$ 2 &  Installed  in one of the X300 USRP to form a dual channel Transmitter  \\ [1ex] 
 \hline
 Spectrum Analyzer & Agilent PXA  & Placed 5 meters away from the transmitter  \\ [1ex] 
 \hline
\end{tabular}
\end{center}
\end{table*}
\subsection{Communication Setup}
In our experimental setup, we have used IEEE-802.11p \cite{s80211std} signal waveform. We have used a slightly modified version of the GNU Radio based implementation of IEEE-802.11p provide in open source by the authors of \cite{bloessl2013ieee}. Table~\ref{table:parameter_list} illustrates typical parameter values used in this experiment.

\begin{table}
\caption{List of typical parameter values used in our experiment.}
\label{table:parameter_list}
\begin{center}
 \begin{tabular}{||c |c||} 
 \hline
 \rowcolor{Gray}
 Parameter & Typical Value \\ 
 \hline\hline
 Center Frequency & 2.4 GHz  \\ 
 \hline
 Bandwidth & 5 MHz \\
 \hline
 FFT Length & 64 \\
 \hline
 Occupied Subcarrires & 52 \\
 \hline
 Data Subcarriers & 48 \\
 \hline
 Pilot Subcarriers & 4  \\
 \hline
 Packet Size & 1000 Bytes  \\
 \hline
 Packet Interval & 150 ms  \\
 \hline
 Transmit Power & +15 dBm \\ [1ex]
  \hline 
 Array Configuration & ULA  \\ [1ex] 
 \hline
\end{tabular}
\end{center}
\end{table}
 
 \subsection{Results}
 First, we activate a single transmit antenna while keeping the other three antennas inactive. We measure the harvested power at each harvesting antenna as well as the power received by an antenna directly attached to the spectrum analyzer. The transmitter antenna array is $5$ meters away from the the spectrum analyzer. The spectrum analyzer provides the average power level measurements per resolution bandwidth. Denote by $P_{res}$, $W$ and $B$ the average power level per resolution bandwidth, the resolution bandwidth and the signal bandwidth, respectively. The total received power, $P_t$, is evaluated as follows:
 \begin{align}
 P_{t} = 10 \log_{10} \left(\dfrac{B 10^{(P_{res}/10})}{W}\right).
\end{align}
We have tested four different selections of active and harvesting antennas. As shown in Figure \ref{fig:selection}, we first activate the left most antenna while the remaining antennas are kept inactive for the sake of energy harvesting. Then, we used the second from the left antenna for data transmission. We then activate the two left most antennas and, finally, we use an interleaved pattern of harvesting and active antennas. At each setup, we measure the total harvested power and the power received at the target receiver for different array separations.

\begin{figure}[t]
   \centering
   \includegraphics[width=0.250\textwidth]{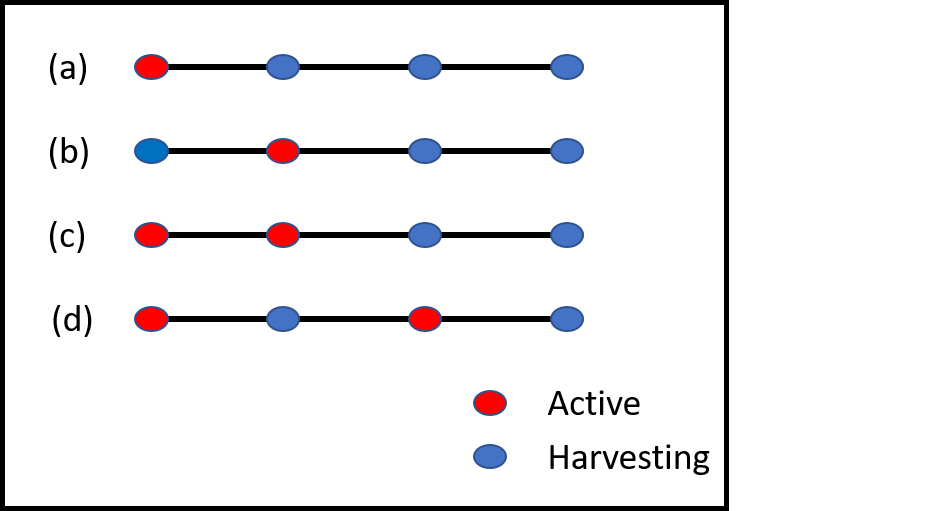}
   \caption{Illustration of our four selections of active and harvesting antennas.}
   \label{fig:selection}
 \end{figure}
   
We started by setting the array spacing equal to $\lambda/4$. In Figure \ref{fig:first}, only the left most antenna is active while the remaining three antennas are dedicated for energy harvesting. We notice that, $\lambda/4$ element spacing yield the best harvesting performance. Meanwhile, the maximum received power is observed for array spacing of $\lambda/2$. Similar results can observed when the second left most antenna is used for transmission as can be seen in Figure \ref{fig:second}. These results suggest that, the closer the harvesting antenna to the transmission antenna, the larger the energy that can be harvested, but also, the larger the coupling between transmission antennas and, hence, the less the power at the target receiver. It worth mentioning that, for array spacing less than $\lambda/4$, while a significant gain of harvesting energy was observed ($4$-$7$dBm), the average received power drops by $3$-$6$dBm due to coupling between the harvesting and transmission antennas.  
 
 \begin{figure}[t]
   \centering
   \includegraphics[width=0.50\textwidth]{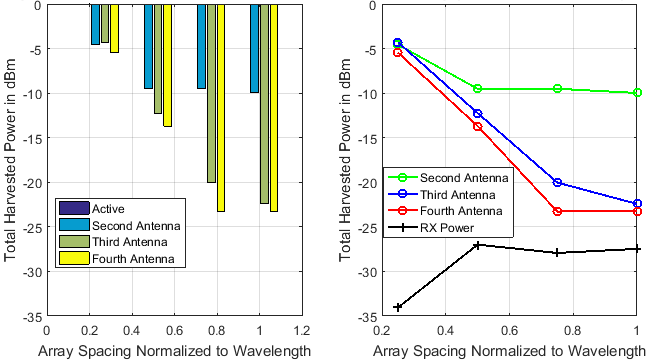}
   \caption{Total harvested and received power with the left most antenna active while the remaining elements are used for harvesting as a function of array spacing.}
   \label{fig:first}
 \end{figure}

\begin{figure}[t]
   \centering
   \includegraphics[width=0.50\textwidth]{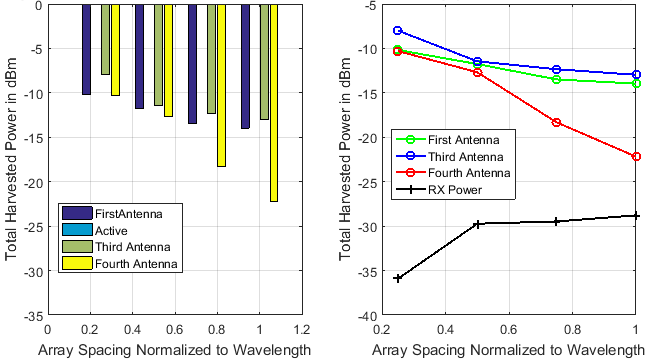}
   \caption{Total harvested and received power with the only the second antenna active while the remaining elements are used for harvesting as a function of array spacing.}
   \label{fig:second}
 \end{figure}
 In Figure \ref{fig:first2}, the two left most antennas are active. Again, we notice that, $\lambda/4$ element spacing yield the best harvesting performance. Meanwhile, the maximum is received power is, again, observed for array spacing of $\lambda/2$. On the other hand, in Figure \ref{fig:interleaved}, the two active antennas are chosen in an interleaved manner. In this setup, we notice that, $\lambda/4$ element spacing provide the best harvesting performance with relatively small reduction in the received power. The gain in harvesting energy is due to having the harvesting antennas closer to two active antennas while maintaining the spacing between the two active elements as $\lambda/2$. This gain comes at the expense of small reduction of received power due to coupling between harvesting and transmission antennas. Meanwhile, for larger array spacing, both the harvested and received power are reduced. This is due to the grating loops generated when the two active elements are separated. We see that the antenna that lies between two transmission antennas was able to maintain better harvesting energy. 
 
The results of this section show that for the special case of a transmitter with ULA, the interleaved configuration with $\lambda/4$ element spacing provides significant gain in harvested energy and, at the same time, maintains reasonably small coupling loss. However, obviously in a {\bf rich scattering} environment, any antenna spacing lower than $\lambda/2$ in the linear arrangement will lead to a reduction in the achieved spatial diversity. Thus, {\em despite striking the optimal balance between recycled and received power, the small antenna spacing may lead to a decrease in the achieved rate in rich scattering environments.}

 \begin{figure}[t]
   \centering
   \includegraphics[width=0.50\textwidth]{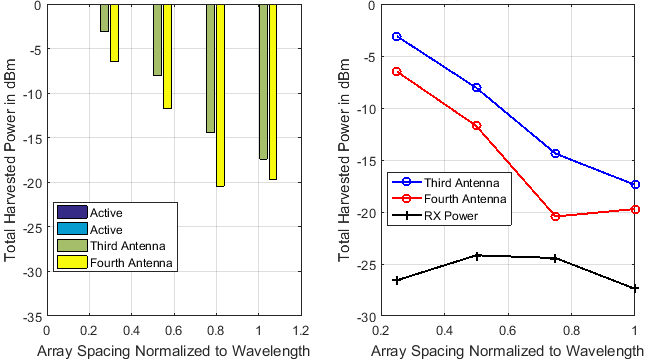}
   \caption{Total harvested and received power with the the two left most antennas are active while the remaining elements are used for harvesting as a function of array spacing.}
   \label{fig:first2}
 \end{figure}

\begin{figure}[t]
   \centering
   \includegraphics[width=0.50\textwidth]{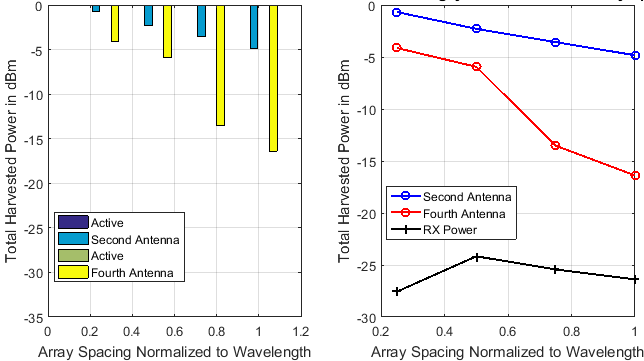}
   \caption{Total harvested and received power with the two active antennas selected in an interleaved manner  while the remaining elements are used for harvesting as a function of array spacing.}
   \label{fig:interleaved}
 \end{figure}
\section{Conclusion}
We studied the basic limits of an energy recycling MISO communication system and showed that it can be utilized to increase the rate achieved, subject to an average power constraint. In our recycling setup, each transmission antenna can alternate between transmission and harvesting states, as chosen by the system. In this setting, we developed the capacity expression for the achievable rate with recycling. We developed the optimal antenna scheduling algorithm that decides on the active and recycling antennas. We also evaluated the optimal power allocation for the active antennas. However, we showed that the optimal scheme has a complexity that grows exponentially with the number of transmission antennas. To address this problem, we developed a near-optimal $O(M \log M)$ complexity algorithm, which preserves optimality in the case of symmetric gains between transmission-harvesting antenna pairs. We numerically evaluated the performance with energy recycling and showed that, with energy recycling, the number of antennas needed to achieve the same rate can be reduced by $10$-$15$\% in typical cases.

To study the issues related to antenna coupling and understand the amount of reduction in received power, we developed a hardware setup and obtained experimental results for a $4\times 1$ MISO system with ULA at the transmitter. Our results reveal that, the closer the harvesting antenna to an active antenna, the larger the harvested power but also the larger the coupling between transmitting antennas, lowering the received power. For that setup, our evaluations showed that, having the transmitting and harvested antennas interleaved (one active followed by one harvesting antenna) at an array spacing equals to $\lambda/4$ yield the best harvested and received power results.





\bibliographystyle{IEEEtran}
%
\bibliography{IEEEabrv,References}

\begin{thebibliography}{10}
\providecommand{\url}[1]{#1}
\csname url@samestyle\endcsname
\providecommand{\newblock}{\relax}
\providecommand{\bibinfo}[2]{#2}
\providecommand{\BIBentrySTDinterwordspacing}{\spaceskip=0pt\relax}
\providecommand{\BIBentryALTinterwordstretchfactor}{4}
\providecommand{\BIBentryALTinterwordspacing}{\spaceskip=\fontdimen2\font plus
\BIBentryALTinterwordstretchfactor\fontdimen3\font minus
  \fontdimen4\font\relax}
\providecommand{\BIBforeignlanguage}[2]{{%
\expandafter\ifx\csname l@#1\endcsname\relax
\typeout{** WARNING: IEEEtran.bst: No hyphenation pattern has been}%
\typeout{** loaded for the language `#1'. Using the pattern for}%
\typeout{** the default language instead.}%
\else
\language=\csname l@#1\endcsname
\fi
#2}}
\providecommand{\BIBdecl}{\relax}
\BIBdecl

\bibitem{ulukus2015energy}
S.~Ulukus, A.~Yener, E.~Erkip, O.~Simeone, M.~Zorzi, P.~Grover, and K.~Huang,
  ``Energy harvesting wireless communications: A review of recent advances,''
  \emph{IEEE Journal on Selected Areas in Communications}, vol.~33, no.~3, pp.
  360--381, 2015.

\bibitem{varshney2008transporting}
L.~R. Varshney, ``Transporting information and energy simultaneously,'' in
  \emph{Information Theory, 2008. ISIT 2008. IEEE International Symposium
  on}.\hskip 1em plus 0.5em minus 0.4em\relax IEEE, 2008, pp. 1612--1616.

\bibitem{grover2010shannon}
P.~Grover and A.~Sahai, ``Shannon meets tesla: Wireless information and power
  transfer,'' in \emph{Information Theory Proceedings (ISIT), 2010 IEEE
  International Symposium on}.\hskip 1em plus 0.5em minus 0.4em\relax IEEE,
  2010, pp. 2363--2367.

\bibitem{gatzianas2010control}
M.~Gatzianas, L.~Georgiadis, and L.~Tassiulas, ``Control of wireless networks
  with rechargeable batteries [transactions papers],'' \emph{IEEE Transactions
  on Wireless Communications}, vol.~9, no.~2, 2010.

\bibitem{sharma2010optimal}
V.~Sharma, U.~Mukherji, V.~Joseph, and S.~Gupta, ``Optimal energy management
  policies for energy harvesting sensor nodes,'' \emph{IEEE Transactions on
  Wireless Communications}, vol.~9, no.~4, 2010.

\bibitem{yang2012optimal}
J.~Yang and S.~Ulukus, ``Optimal packet scheduling in an energy harvesting
  communication system,'' \emph{IEEE Transactions on Communications}, vol.~60,
  no.~1, pp. 220--230, 2012.

\bibitem{ozel2011transmission}
O.~Ozel, K.~Tutuncuoglu, J.~Yang, S.~Ulukus, and A.~Yener, ``Transmission with
  energy harvesting nodes in fading wireless channels: Optimal policies,''
  \emph{IEEE Journal on Selected Areas in Communications}, vol.~29, no.~8, pp.
  1732--1743, 2011.

\bibitem{gunduz2011two}
D.~G{\"u}nd{\"u}z and B.~Devillers, ``Two-hop communication with energy
  harvesting,'' in \emph{Computational Advances in Multi-Sensor Adaptive
  Processing (CAMSAP), 2011 4th IEEE International Workshop on}.\hskip 1em plus
  0.5em minus 0.4em\relax IEEE, 2011, pp. 201--204.

\bibitem{huang2013throughput}
C.~Huang, R.~Zhang, and S.~Cui, ``Throughput maximization for the gaussian
  relay channel with energy harvesting constraints,'' \emph{IEEE Journal on
  Selected Areas in Communications}, vol.~31, no.~8, pp. 1469--1479, 2013.

\bibitem{varan2013energy}
B.~Varan and A.~Yener, ``The energy harvesting two-way decode-and-forward relay
  channel with stochastic data arrivals,'' in \emph{Global Conference on Signal
  and Information Processing (GlobalSIP), 2013 IEEE}.\hskip 1em plus 0.5em
  minus 0.4em\relax IEEE, 2013, pp. 371--374.

\bibitem{vaze2013transmission}
R.~Vaze, ``Transmission capacity of wireless ad hoc networks with energy
  harvesting nodes,'' in \emph{Global Conference on Signal and Information
  Processing (GlobalSIP), 2013 IEEE}.\hskip 1em plus 0.5em minus 0.4em\relax
  IEEE, 2013, pp. 353--358.

\bibitem{srivastava2013basic}
R.~Srivastava and C.~E. Koksal, ``Basic performance limits and tradeoffs in
  energy-harvesting sensor nodes with finite data and energy storage,''
  \emph{IEEE/ACM Transactions on Networking (TON)}, vol.~21, no.~4, pp.
  1049--1062, 2013.

\bibitem{jog2014energy}
V.~Jog and V.~Anantharam, ``An energy harvesting awgn channel with a finite
  battery,'' in \emph{Information Theory (ISIT), 2014 IEEE International
  Symposium on}.\hskip 1em plus 0.5em minus 0.4em\relax IEEE, 2014, pp.
  806--810.

\bibitem{zhang2013mimo}
R.~Zhang and C.~K. Ho, ``Mimo broadcasting for simultaneous wireless
  information and power transfer,'' \emph{IEEE Transactions on Wireless
  Communications}, vol.~12, no.~5, pp. 1989--2001, 2013.

\bibitem{liu2013wireless}
L.~Liu, R.~Zhang, and K.-C. Chua, ``Wireless information transfer with
  opportunistic energy harvesting,'' \emph{IEEE Transactions on Wireless
  Communications}, vol.~12, no.~1, pp. 288--300, 2013.

\bibitem{park2013joint}
J.~Park and B.~Clerckx, ``Joint wireless information and energy transfer in a
  two-user mimo interference channel,'' \emph{IEEE Transactions on Wireless
  Communications}, vol.~12, no.~8, pp. 4210--4221, 2013.

\bibitem{telatar1999capacity}
E.~Telatar, ``Capacity of multi-antenna gaussian channels,'' \emph{European
  transactions on telecommunications}, vol.~10, no.~6, pp. 585--595, 1999.

\bibitem{tse2005fundamentals}
D.~Tse and P.~Viswanath, \emph{Fundamentals of wireless communication}.\hskip
  1em plus 0.5em minus 0.4em\relax Cambridge university press, 2005.

\bibitem{s80211std}
``Ieee standard for information technology-- local and metropolitan area
  networks-- specific requirements-- part 11: Wireless lan medium access
  control (mac) and physical layer (phy) specifications amendment 6: Wireless
  access in vehicular environments,'' \emph{IEEE Std 802.11p-2010 (Amendment to
  IEEE Std 802.11-2007 as amended by IEEE Std 802.11k-2008, IEEE Std
  802.11r-2008, IEEE Std 802.11y-2008, IEEE Std 802.11n-2009, and IEEE Std
  802.11w-2009)}, pp. 1--51, July 2010.

\bibitem{bloessl2013ieee}
B.~Bloessl, M.~Segata, C.~Sommer, and F.~Dressler, ``An ieee 802.11 a/g/p ofdm
  receiver for gnu radio,'' in \emph{Proceedings of the second workshop on
  Software radio implementation forum}.\hskip 1em plus 0.5em minus 0.4em\relax
  ACM, 2013, pp. 9--16.

\end{thebibliography}

%

\appendices
\section{Proof of Theorem~\ref{lemma:mainresult}}
\label{sec:lemma_proof}
The transmitter employs Gaussian codebook and  conjugate beamforming. The transmitted signal on active antennas can be written as
\begin{align}
X = S \frac{G_\mathcal A^*}{||G_\mathcal A||}\nonumber
\end{align}
where $G_\mathcal{A}\eqdef \{G_k\}_{k\in \mathcal A(H)}$. Note that $||G_\mathcal A||^2 = \sum_{k\in\mathcal{A}(H)}H_k$. Furthermore, $S$ is complex Gaussian signal codeword signal at a particular channel use and is distributed as $CN(0, P(H))$
 
The rate for a given power allocation $P(\cdot)$ is mutual information $I(X;Y|H)$. We skip the derivation of equality 
\begin{align}
I(X;Y|H) = \mathbb E\left[\log \left(1 + P(H)\sum_{k\in \mathcal A(H)}H_k\right)\right]\nonumber
\end{align}
as the derivation is identical with the derivation of the capacity of the non-harvesting MISO communication proof of which can be found in~\cite{tse2005fundamentals}

In the rest of the proof we provide the derivation of the average consumed power at the transmitter. Note that average consumed power is formulated as $
\mathbb E[P(H)] -\mathbb E\left[\sum_{l\in \mathcal{A}^c(H)} |Z_l|^2\right]$, where $\mathbb E\left[\sum_{l\in \mathcal{A}^c(H)} |Z_l|^2\right]$  is the average harvested power. Next we derive the harvested energy as
\begin{align}
&\mathbb E\left[\sum_{l\in \mathcal{A}^c(H)} |Z_l|^2\right]= \nonumber \\
&\mathbb E\left[\sum_{l \in \mathcal A^c(H)} \sum_{m,n\in \mathcal A(H)}  \sqrt{\alpha_{ml}}\sqrt{\alpha_{nl}} X_m X_n^*\right]\nonumber\\
&\qquad\qquad\qquad\qquad  + \mathbb E\left[\sum_l |N_l|^2\right] \nonumber \\
& = \mathbb E\left[|S|^2 \sum_{l \in \mathcal A^c(H)} \sum_{m,n\in \mathcal A(H)} \sqrt{\alpha_{ml}}\sqrt{\alpha_{nl}} \frac{G_m G_n^*}{||G_\mathcal{A}||^2}\right]\nonumber\\
&\qquad\qquad\qquad\qquad + \mathbb E\left[\sum_l |N_l|^2\right]\nonumber\\
&=\mathbb E\left[|S|^2 \sum_{l \in \mathcal A^c(H)}\sum_{m \in\mathcal A(H)}  \alpha_{ml} \frac{H_m}{||G_\mathcal{A}||^2}\right]\nonumber\\
&+ \mathbb E\left[|S|^2 \sum_{l \in \mathcal A^c(H)} \sum_{m,n\in \mathcal A(H):m\neq n} \sqrt{\alpha_{ml}}\sqrt{\alpha_{nl}} \frac{G_m G_n^*}{||G_\mathcal{A}||^2}\right]\nonumber\\
&\qquad\qquad\qquad\qquad + \mathbb E\left[\sum_l |N_l|^2\right] \label{eq:long_format}
\end{align}

We next show that the second term in the RHS of~\eqref{eq:long_format} is equal to zero. First, expand $G_m$ and $G_n$ as $\sqrt{H_m} \exp(i\Phi_m)$ and $\sqrt{H_n} \exp(i\Phi_n)$, respectively, where $\Phi_n$ and $\Phi_m$ are uniformly distributed on $[0, 2\pi]$. Then, we evaluate the second term of \eqref{eq:long_format} as 
\begin{align}
&\text{Second term of RHS of~\eqref{eq:long_format}}=\nonumber\\
&\mathbb E\left[|S|^2 \sum_{l \in \mathcal A^c(H)} \sum_{m,n\in \mathcal A(H):m\neq n} \sqrt{\alpha_{ml}}\sqrt{\alpha_{nl}}\right. \nonumber\\
&\qquad\quad\qquad\qquad\left.  \exp(i\Phi_m)\exp(i\Phi_n)\frac{\sqrt{H_m} \sqrt{H_n}}{\sum_{k\in\mathcal{A}(H)}H_k}\right]\nonumber\\
&=\mathbb E_{S,H}\left[|S|^2 \sum_{l \in \mathcal A^c(H)} \sum_{m,n\in \mathcal A(H):m\neq n} \sqrt{\alpha_{ml}}\sqrt{\alpha_{nl}}\right. \nonumber\\
&\qquad\left. \frac{\sqrt{H_m} \sqrt{H_n}}{\sum_{k\in\mathcal{A}(H)}H_k}\mathbb E_{\Phi_n \Phi_m|H,S}\left[\exp(i\Phi_m)\exp(i\Phi_n)\right]\right]\label{eq:inner_outer}\\
&=0 \label{eq:null},
\end{align}
where the outer expectation in~\eqref{eq:inner_outer} is over $H$ and $S$. The inner expectation in~\eqref{eq:inner_outer} is over $\Phi_m$ and $\Phi_n$ and is conditioned on $(H,S)$. The equality in~\eqref{eq:null} follows due to 
\begin{align}
&\mathbb E_{\Phi_n, \Phi_m|H,S}\left[\exp(i\Phi_n)\exp(i\Phi_m)\right] \nonumber\\
&=\mathbb E_{\Phi_n, \Phi_m}\left[\exp(i\Phi_n)\exp(i\Phi_m)\right] \nonumber\\
&=\mathbb E\left[\exp(i\Phi_n)\right]\mathbb E\left[\exp(i\Phi_m)\right] \nonumber\\
&=0\nonumber,
\end{align}
where the first equality follows from the fact that $(\Phi_m,\Phi_n)$ are independent with $(S,H)$, the second equality follows from the independence of $\Phi_m$ and $\Phi_n$ and the third equality follows from the facts that $\Phi_m$ and $\Phi_n$ are distributed with $CN(0,1)$ and have zero mean.

Showing the second term on the RHS of \eqref{eq:long_format} is zero, we continue the derivation of the average harvested energy as
\begin{align}
&\text{RHS of \eqref{eq:long_format}}\nonumber\\
&=\mathbb E_H\left[\mathbb E_{S|H}\left[|S|^2\right] \sum_{l \in \mathcal A^c(H)}\sum_{m \in\mathcal A(H)}  \alpha_{ml} \frac{H_m}{\sum_{k\in\mathcal{A}(H)}H_k}\right]\nonumber\\
&\qquad\qquad\qquad\qquad + \mathbb E\left[\sum_l |N_l|^2\right]\nonumber\\
&=\mathbb E\left[P(H) \sum_{l \in \mathcal A^c(H)}\sum_{m \in\mathcal A(H)}  \alpha_{ml} \frac{H_m}{\sum_{k\in\mathcal{A}(H)}H_k}\right]\nonumber\\
&\qquad\qquad\qquad\qquad + \mathbb E\left[\sum_l |N_l|^2\right] \label{eq:short_format}
\end{align}
where the second equality follows from the fact data signal $S$ is distributed with $CN(0, P(H))$.

Finally, we can bound the consumed power at the transmitter $
\mathbb E[P(H)] -\mathbb E\left[\sum_{l\in \mathcal{A}^c(H)} |Z_l|^2\right]$ as
\begin{align}
&\mathbb E[P(H)] -\mathbb E\left[\sum_{l\in \mathcal{A}^c(H)} |Z_l|^2\right]=\nonumber\\
&\mathbb E[P(H)]- \mathbb E\left[P(H) \sum_{l \in \mathcal A^c(H)}\sum_{m \in\mathcal A(H)}  \alpha_{ml} \frac{H_m}{\sum_{k\in\mathcal{A}(H)}H_k}\right]\nonumber\\
&\qquad\qquad\qquad\qquad - \mathbb E\left[\sum_l |N_l|^2\right] \label{eq:short_format2}\\
&\leq E\left[P(H)f(H)\right], \nonumber
\end{align}
where the inequality follows from the definition of  function $f(\cdot)$ and from the fact we remove the last term on the RHS of \eqref{eq:short_format2}.
\end{document}